\newcommand{\cF}{\mathcal{F}}
\newcommand{\cI}{\mathcal{I}}
\newcommand{\cN}{\mathcal{N}}
\newcommand{\cO}{\mathcal{O}}
\newcommand{\cP}{\mathcal{P}}
\newcommand{\cR}{\mathcal{R}}
\newcommand{\cT}{\mathcal{T}}
\newcommand{\bbR}{\mathbb{R}}
\newcommand{\argmin}{\mathop{\rm argmin}}
\NewDocumentCommand{\norm}{mG{2}}{\lVert#1\rVert_{#2}}
\newcommand{\trans}{\top}
\newcommand{\trsp}[1]{#1^\trans}
\DeclareMathOperator{\conv}{conv}
\newcommand{\myparagraph}[1]{\smallskip\noindent\textbf{#1.}}
\newtheorem{lemma}{Lemma}
\newtheorem{prop}{Proposition}
\theoremstyle{definition}
\begin{document}

\title{Linear Regression without Correspondences via Concave Minimization}

\author{Liangzu Peng \ \ \ \ \ \ \ \ \ \ \ \ \  Manolis C. Tsakiris
\thanks{The authors are with the School of Information Science and Technology, ShanghaiTech University, Shanghai, China (e-mail: \{penglz,mtsakiris\}@shanghaitech.edu.cn).}
\thanks{This manuscript is the preprint version of the letter available at \url{https://ieeexplore.ieee.org/document/9178410}.}
}

\markboth{PREPRINT}
{Shell \MakeLowercase{\textit{et al.}}: Bare Demo of IEEEtran.cls for IEEE Journals}
\maketitle

\begin{abstract}
Linear regression without correspondences concerns the recovery of a signal in the linear regression setting, where the correspondences between the observations and the linear functionals are unknown. The associated maximum likelihood function is NP-hard to compute when the signal has dimension larger than one. To optimize this objective function we reformulate it as a concave minimization problem, which we solve via branch-and-bound. This is supported by a computable search space to branch, an effective lower bounding scheme via convex envelope minimization and a refined upper bound, all naturally arising from the concave minimization reformulation. The resulting algorithm outperforms state-of-the-art methods for fully shuffled data and remains tractable for up to $8$-dimensional signals, an untouched regime in prior work.
\end{abstract}

\begin{IEEEkeywords}
Linear Regression without Correspondences, Unlabeled Sensing, Homomorphic Sensing, Concave Minimization, Branch-and-Bound, Linear Assignment Problem.
\end{IEEEkeywords}

\IEEEpeerreviewmaketitle

\section{Introduction}\label{sec:introduction}
Linear regression without correspondences is concerned with the estimation of an $n$-dimensional signal $x^* \in \bbR^n$ from a set of $m$ noisy linear measurements $y_i \in \bbR$ and the set of linear functionals $\trsp{a_j} \in \bbR^{1 \times n}$ that generated them, in the absence of the correspondence between these two. Concretely, for every $i \in [m]:=\{1,\dots, m\}$ we have $y_i =  \trsp{a}_{\pi^*(i)} x^* + \epsilon_i$, where $\pi^*$ is an unknown permutation of $[m]$ and $\epsilon_i$ is additive noise. With $y, \epsilon \in \bbR^m$ and $A \in \bbR^{m \times n}$ having $y_i, \epsilon_i$ and $\trsp{a}_{i}$ in their $i$-th row respectively and $\Pi^*$ a permutation matrix

\vspace{-0.1in}

\begin{align}\label{eq:SLR_model}
y=\Pi^* A x^* + \epsilon,
\end{align}
and the objective is to estimate $x^*$ from $y,A$.

One of the first theoretical papers addressing this problem in generality showed that in the noiseless case and under general position hypothesis on the entries of the matrix $A$, the problem is well-posed for any $x^*$ and has a unique solution as long as $m \ge 2n$ \cite{Unnikrishnan:2015,Unnikrishnan-TIT18}. If instead the signal is allowed
to be generic with respect to the measurements $A$ it was further shown that $m \ge n+1$ measurements are sufficient \cite{Tsakiris-arXiv18v2}. 
These results were then generalized to arbitrary linear transformations beyond permutations and down-samplings by \cite{Tsakiris-ICML2019,Tsakiris-arXiv18b,Tsakiris-arXiv18b-v5,Peng-arXiv2020}; see also \cite{Dokmanic-SPL2019}. Bringing back the noise $\epsilon$ into the picture \cite{Pananjady-TIT18} obtained SNR conditions under which recovery of $\Pi^*$ is possible from the maximal likelihood estimator

\begin{align}\label{eq:MLE}
(\hat{\Pi},\hat{x})\in \argmin_{x\in\bbR^n,\Pi\in\cP} \norm{\Pi y- A x},
\end{align}
where $\cP$ consists of all $m\times m$ permutation matrices. Finally, a convex $\ell_1$ recovery theory has been developed for the case where only a fraction of the correspondences is missing \cite{Slawski-JoS19}. 

Standing on firm theoretical grounds, in this paper we take an interest in the computational challenges of solving the linear regression without correspondences problem. The easiest case is when $n=1$ for which an $\cO(m\log m)$ sorting-based algorithm optimally solves \eqref{eq:MLE} \cite{Pananjady-TIT18}. The next tractable instance is when the data are partially shuffled, as occurs, e.g., in record linkage \cite{Slawski-JoS19,Lahiri-JASA2005,Shi-arXiv18,Slawski-UAI2019,Slawski-arXiv2019,Slawski-arXiv2019b,Zhang-arXiv2019}. In such a case, the estimation of $x^*$ may be performed via convex $\ell_1$ robust regression \cite{Slawski-JoS19} or a pseudo-likelihood approach \cite{Slawski-arXiv2019b}, these tolerate at most $50\%$ or $70\%$ mismatches, respectively. When $n \ge 2$ and the data are fully shuffled, as in point set registration \cite{Li-ICCV07,Lian-PAMI17} or signal estimation using distributed sensors \cite{Song-ISIT18,Peng-ICASSP2019,Zhu-CL2017}, \eqref{eq:MLE} is strongly NP-hard \cite{Pananjady-TIT18,Hsu-NIPS17}. Exhaustive search comes with $\cO(m!mn^2)$ complexity. Alternating minimization, iteratively updating $\Pi$ and $x$, is sensitive to initialization \cite{Abid:arXiv17, Abid:arXiv18, Haghighatshoar-TSP18}. The RANSAC-type algorithm of \cite{Elhami-ICASSP17} originally applicable only to noiseless data and later robustified by \cite{Tsakiris-ICML2019} to which we refer as \cite{Tsakiris-ICML2019}-B, requires solving $\cO\big(\binom{m}{n}n!\big)$ $n \times n$ linear systems of equations. The fully polynomial-time approximation scheme of \cite{Hsu-NIPS17} employs sophisticated enumeration using $\cO(m^{4n})$ $\epsilon$-nets thus entailing a complexity of at least $\cO(m^{4n})$. The algebraic-geometric algorithm of \cite{Tsakiris-arXiv18v2} uses Gr{\"o}bner basis methods to solve an $n \times n$ polynomial system of equations derived from \eqref{eq:SLR_model}. Even though it has linear complexity in $m$ its running time grows exponentially in the signal dimension $n$: for $m=500$ this is $15$ msec for $n= 4$, $45$ sec for $n=5$, $37$ min for $n=6$, and no result reported for $n\geq 7$.
Finally, a working algorithm \cite{Tsakiris-ICML2019}-A for unlabeled sensing was built in \cite{Tsakiris-ICML2019} to globally optimize \eqref{eq:MLE} by combining branch-and-bound and dynamic programming to repeatedly solve a \textit{one-dimensional} linear assignment problem in $\cO(m^2)$ time as opposed to the typical $\cO(m^3)$ of \cite{Jonker-Computing1987}. Even though with promising empirical performance and a variation that gave state-of-the-art results in image registration, the algorithm does not scale well for $n >4$ due to its \textit{naive lower bounding scheme} \cite{Tsakiris-ICML2019}.

Even though the known theoretical SNR requirements for correct recovery via \eqref{eq:MLE} are rather demanding \cite{Pananjady-TIT18}, optimizing \eqref{eq:MLE} can still be effective in reasonable real-data situations (Section \ref{section:numerics}). Hence in this paper we propose a branch-and-bound technique for solving \eqref{eq:MLE}. The main innovation here is the reformulation of \eqref{eq:MLE} into a concave minimization problem. This leads to a computable search space to branch, a tight lower bound via \textit{convex envelope} computations, and a refined upper bound through alternating minimization. To the best of our knowledge, the proposed algorithm is the best performing working method for fully shuffled data and remains tractable for $n=7,8$ and $m=100$, an untouched regime of prior work.




\section{The Concave Minimization Approach}\label{section:cvx_minimize}
We propose a concave minimization approach of the branch-and-bound type  to solve \eqref{eq:MLE}. The branch-and-bound algorithm is used to minimize a given objective function, say $g$, globally optimally \cite{Emiya-ICASSP14,Tsakiris-ICML2019,Falk-MS1969,Kalantari-MOR1987}. That is, the computed solution $\hat{z}$ is  $\delta$-close to the optimal $z^*$, i.e. $g(\hat{z})<g(z^*)+\delta$ for some $\delta>0$. Simply put, given an initial region containing $z^*$, this algorithm \textit{branches}: it recursively subdivides a selected region into sub-regions. On the other hand, \textit{bounding} is to determine the lower bound of $g$ over a given sub-region. In parallel, the algorithm computes an upper bound of $g(z^*)$ and accordingly the smallest upper bound $q_u$ among upper bounds obtained so far. A sub-region is excluded if its lower bound is not less than $q_u-\delta$.
In this way the algorithm explores and narrows the search space until a $\delta$-close solution is found. 
The tighter the lower and upper bounds, the more regions can be excluded and the faster the algorithm converges. 

Two challenges are in the way of adopting branch-and-bound for problem \eqref{eq:MLE}. First is the choice of the branching variable, $\Pi\in\cP$ or $x\in\bbR^n$. 
Both strategies have been explored in the literature. Branching over $\cP$ is far from feasible as discussed in \cite{Li-ICCV07}, even if more than $99.9\%$ permutations can be excluded, which is possible with a tight lower bound \cite{Emiya-ICASSP14} or with a learning-based pruning strategy \cite{Shen-TWC2019}. This is because $|\cP|=m!$ grows exponentially with $m$, for example, $10!>2^{21}, 20!>2^{61}$. On the other hand,  \cite{Tsakiris-ICML2019}-A proposes to branch over $\bbR^n$, but it requires as a hyper-parameter a region that contains the global minimizer. The second challenge involves the trade-off between the efficient computation and tightness of the lower and upper bounds. For example, \cite{Tsakiris-ICML2019}-A uses dynamic programming to efficiently compute a rather loose bound in $\cO(m^2)$ time, while \cite{Emiya-ICASSP14} computes a tight bound by solving an expensive convex optimization problem.

In this work we reformulate \eqref{eq:MLE} into the minimization of a quadratic concave function $g$ over a convex polytope $\cF^\circ\subset \bbR^n$. This type of problem is a classic one already studied in \cite{Falk-MS1969}, where branching over $\cF^\circ$ was proposed. It was observed later in \cite{Kalantari-MOR1987} that it is more efficient to branch over the smallest rectangle $\cR^\circ$ that contains $\cF^\circ$ than directly over $\cF^\circ$. In our case $\cR^\circ$ can be computed via solving $2n$ sorting problems. Our branching space is this easily computable rectangle.

It is also this reformulation that leads to a  balance between efficiency and tightness of the lower bounding strategy. Following \cite{Kalantari-MOR1987}, we obtain tight lower bounds by minimizing the convex envelope of $g$ over sub-rectangles of $\cR^\circ$, which amounts to solving linear assignment problems. We note here that the classic idea of \cite{Kalantari-MOR1987} has recently been applied with good performance to image registration \cite{Lian-PAMI17} and multi-target tracking \cite{Ji-IEEE2019}. Compared to \cite{Lian-PAMI17} and \cite{Ji-IEEE2019}, our reformulation avoids directly manipulating a large $n\times m^2$ matrix, while \cite{Lian-PAMI17} and \cite{Ji-IEEE2019} perform QR decomposition of a matrix of such or larger size. Finally, upper bounds are obtained by a suitably initialized alternating minimization scheme. This is a further improvement upon the typical upper bound computation of \cite{Lian-PAMI17,Ji-IEEE2019}. As we will see in Section \ref{section:numerics} (Table \ref{table:fully_shuffled_rt}), this leads to an algorithm that outperforms existing algorithms for linear regression with fully shuffled data.




\subsection{Concave Minimization Reformulation}\label{subsection:obj}
Let $A=U_A\Sigma_A \trsp{V}_A$ be the thin SVD of the rank-$r$ matrix $A\in\bbR^{m\times n}$.
For solving \eqref{eq:MLE} we consider the following problem
\begin{align}
(\hat{\Pi},\hat{w})\in \argmin_{\Pi\in \cP}\min_{w\in\bbR^n}\norm{\Pi y- U_Aw}, \label{eq:MLE2}
\end{align}
With $(\hat{\Pi},\hat{w})$ of \eqref{eq:MLE2} we can obtain $\hat{x}$ of \eqref{eq:MLE} by solving $\hat{w}=\Sigma_A \trsp{V}_A x$ for $x$, a linear system of equations that have exactly one solution if $r=n$ and have infinitely many if $r<n$. The solution to the inner minimization of \eqref{eq:MLE2} is $w_{\Pi} =\trsp{U_A} \Pi y$. With $\bar{y}:=y/\norm{y}$ and the Kronecker product $\otimes$, \eqref{eq:MLE2} is the same as
\begin{align}
&\hat{\Pi}\in \argmin_{\Pi\in \cP}\norm{\Pi y- U_A\trsp{U_A}\Pi y} \label{eq:cvx_disguise}\\
\Leftrightarrow&\hat{\Pi}\in\argmin_{\Pi\in \cP} \norm{y}^2-\trsp{y}\trsp{\Pi} U_A\trsp{U}_A\Pi y\\
\Leftrightarrow&\hat{\Pi}\in\argmin_{\Pi\in \cP} -\trsp{\bar{y}}\trsp{\Pi} U_A\trsp{U}_A\Pi\bar{y} \\
\Leftrightarrow&\hat{\Pi}\in\argmin_{\Pi\in \cP} -\norm{(\trsp{\bar{y}}\otimes\trsp{U}_A)\text{vec}(\Pi)}^2. 
\label{eq:opt_over_Pi}
\end{align}
As already mentioned, branching over $\cP$ to solve \eqref{eq:opt_over_Pi} is not feasible. One may instead optimize \eqref{eq:opt_over_Pi} over 
\begin{align}\label{eq:ds}
\conv(\cP)=\{B\in\bbR^{m\times m} : \trsp{B}e=e, Be=e,B\geq 0 \},
\end{align}
the convex hull of $\cP$. In \eqref{eq:ds}, $B\geq 0$ denotes that all entries of $B$ are no less than $0$, and $e$ is the $m$-dimensional vector whose entries are $1$. Note that $\conv(\cP)$ is the well-known Birkhoff polytope, consisting of the set of all $m\times m$ doubly stochastic matrices \cite{Birkhoff-1946,Neumann-1953}. So we arrive at
\begin{align}\label{eq:cvx_max}
\min_{B\in \conv(\cP)} -\norm{K\text{vec}(B)}^2=:f(B),
\end{align}
where  $K=\trsp{\bar{y}}\otimes\trsp{U}_A$. 
The relationship between the minimizers of \eqref{eq:opt_over_Pi} and \eqref{eq:cvx_max} is characterized by the following proposition. 
\begin{prop}\label{prop:exist_permutation}
	If \eqref{eq:opt_over_Pi} has a unique minimizer $\hat{\Pi}$, then $\hat{\Pi}$ is also the unique minimizer for \eqref{eq:cvx_max}. 
\end{prop}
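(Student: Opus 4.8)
The plan is to combine the concavity of $f$ with the Birkhoff--von Neumann description of $\conv(\cP)$. Since $\text{vec}(B)\mapsto K\,\text{vec}(B)$ is linear and $\norm{\cdot}^2$ is convex, the composition $B\mapsto\norm{K\,\text{vec}(B)}^2$ is convex, so $f$ in \eqref{eq:cvx_max} is concave on $\conv(\cP)$. Moreover the restriction of $f$ to $\cP$ is \emph{exactly} the objective minimized in \eqref{eq:opt_over_Pi}, because $f(\Pi)=-\norm{(\trsp{\bar{y}}\otimes\trsp{U}_A)\text{vec}(\Pi)}^2$ for every $\Pi\in\cP$. Write $v^\star:=\min_{\Pi\in\cP}f(\Pi)$, which by hypothesis is attained at the unique point $\hat\Pi$.

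First I would show $\min_{B\in\conv(\cP)}f(B)=v^\star$. For any $B\in\conv(\cP)$, write $B=\sum_i\lambda_i\Pi_i$ with distinct $\Pi_i\in\cP$, $\lambda_i>0$, $\sum_i\lambda_i=1$; this is possible because $\conv(\cP)$ is the Birkhoff polytope, so every doubly stochastic matrix is a convex combination of permutation matrices. Concavity of $f$ then gives $f(B)\ge\sum_i\lambda_i f(\Pi_i)\ge\sum_i\lambda_i v^\star=v^\star$, while the inclusion $\cP\subseteq\conv(\cP)$ gives the reverse inequality. In particular $\hat\Pi$, which attains $v^\star$ over $\cP$, attains it over $\conv(\cP)$ as well, so $\hat\Pi$ is a minimizer of \eqref{eq:cvx_max}.

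It remains to prove uniqueness. Let $B^\star\in\conv(\cP)$ be any minimizer of \eqref{eq:cvx_max} and decompose $B^\star=\sum_i\lambda_i\Pi_i$ as above. Then $v^\star=f(B^\star)\ge\sum_i\lambda_i f(\Pi_i)$, while $f(\Pi_i)\ge v^\star$ for every $i$; since the $\lambda_i$ are strictly positive and sum to $1$, this forces $f(\Pi_i)=v^\star$ for all $i$. Hence every $\Pi_i$ is a minimizer of \eqref{eq:opt_over_Pi}, and by the assumed uniqueness of that minimizer, $\Pi_i=\hat\Pi$ for all $i$; therefore $B^\star=\sum_i\lambda_i\hat\Pi=\hat\Pi$.

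The only delicate point is the identity $\min_{\conv(\cP)}f=\min_{\cP}f$ and the manner in which uniqueness is extracted from it. Concavity alone guarantees merely that \emph{some} vertex of $\conv(\cP)$ is optimal, and a concave function may in general be minimized along an entire face; it is precisely the convex-combination argument together with the uniqueness hypothesis on \eqref{eq:opt_over_Pi} that collapses any such face to the single point $\hat\Pi$. Everything else — concavity of $f$, and matching $f|_{\cP}$ with the objective of \eqref{eq:opt_over_Pi} — is routine, so I expect no further obstacle.
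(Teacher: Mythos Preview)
Your proof is correct and follows essentially the same approach as the paper: Birkhoff--von Neumann to write any minimizer of \eqref{eq:cvx_max} as a convex combination of permutations, concavity of $f$ to push the minimum down to the vertices, and the uniqueness hypothesis to force every permutation in the support to equal $\hat\Pi$. The paper phrases the last step as a contradiction (assume some $\Pi_j\neq\hat\Pi$ appears and derive $f(\hat\Pi)<f(\hat B)$), whereas you argue directly that $f(\Pi_i)=v^\star$ for every $i$; these are the same argument in different clothing.
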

\begin{proof}
	Any minimizer $\hat{B}\in\conv(\cP)$ of \eqref{eq:cvx_max} can be written as a convex combination of permutation matrices, say $\hat{B}=\sum_{i=1}^d \lambda_i\Pi_i$, $\lambda_i\geq 0$ and $\sum_{i=1}^{d}\lambda_i=1$ \cite{Birkhoff-1946,Neumann-1953}. Since $f$ of \eqref{eq:cvx_max} is concave we have $\sum_{i=1}^{d}\lambda_if(\Pi_i)\leq f(\hat{B})$. Suppose that there is some $\Pi_j\neq \hat{\Pi}$. Since \eqref{eq:opt_over_Pi} and \eqref{eq:cvx_max} have the same objective $f$, we get $f(\hat{\Pi})\leq f(\Pi_i)$ for $i\in[d]$ and $f(\hat{\Pi})<f(\Pi_j)$.
	So $f(\hat{\Pi})=\sum_{i=1}^{d}\lambda_if(\hat{\Pi})< \sum_{i=1}^{d}\lambda_i f(\Pi_i)$, which implies $f(\hat{\Pi})<f(\hat{B})$, a contradiction. Hence $\hat{B}=\hat{\Pi}$.
\end{proof}
In what follows we assume that the minimizer of \eqref{eq:opt_over_Pi} is unique. Proposition \ref{prop:exist_permutation} allows us to compute the desired solution to \eqref{eq:opt_over_Pi} by instead optimally solving \eqref{eq:cvx_max}.
Note that $\conv(\cP)$ is a polytope of dimension\footnote{The dimension of a polytope in $\bbR^d$ is the dimension of the smallest affine subspace of $\bbR^d$ containing that polytope \cite{Burkard-AP09}. Note that there are $(2m-1)$ linearly independent equations in \eqref{eq:ds}.}  $(m-1)^2$. High dimensionality suggests inefficiency of branching over $\conv(\cP)$. Next we show that the branching can be conducted over a convex polytope of dimension $n\ll (m-1)^2$. 

The $n\times m^2$ matrix $K$ in \eqref{eq:cvx_max} is of rank $n$. Write $\{\sigma_i\}_{i=1}^n$ and $\{v_i\}_{i=1}^n$ for its singular values and right singular vectors. We can decompose $f(B)$ into a sum of $n$ quadratic terms: 
\begin{align}
f(B) = -\sum\nolimits_{i=1}^n (\sigma_i\trsp{v_i}\text{vec}(B))^2.
\end{align}
Hence minimizing \eqref{eq:cvx_max} is equivalent to
\begin{align}\label{eq:cvx_max_sum}
\min_{\substack{z_i=\sigma_i\trsp{v_i}\text{vec}(B ),\ B \in \conv(\cP)}} -\sum\nolimits_{i=1}^n z_i^2=:g(z_1,\dots,z_n).
\end{align}
In \eqref{eq:cvx_max_sum} $g$ is  concave in $n$ variables $[z_1,\dots,z_n]=:z\in\bbR^n$. Although Problems \eqref{eq:cvx_max} and \eqref{eq:cvx_max_sum} are equivalent, the objective function $g$ is surprisingly simpler. This will play a key role in the sequel. We proceed with three key remarks.

First, arriving at \eqref{eq:cvx_max_sum} is cheap. It requires
computing the singular values $\{\sigma_i\}_{i=1}^n$ and vectors $[v_1,\dots,v_n]=:V$ of the large matrix $K$. This otherwise inefficient SVD computation is reduced to a simple Kronecker product, owing to:
\begin{lemma}\label{lemma:kron_SVD}
	$V=\bar{y}\otimes U_A$, and $\sigma_i=1,i\in \{1,\dots, n\} =:[n]$.
\begin{proof}
		With the $n\times n$ identity matrix $I_n$, the thin SVDs of $\trsp{\bar{y}}$ and $\trsp{U}_A$ are $1\cdot 1 \cdot \trsp{\bar{y}}$ and $I_n\cdot I_n\cdot \trsp{U}_A$ respectively. Hence the thin SVD of $K=\trsp{\bar{y}}\otimes\trsp{U}_A$ is $(1\otimes I_n)(1\otimes I_n)\trsp{(y\otimes U_A)}$. 
\end{proof} 
\end{lemma}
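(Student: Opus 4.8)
The plan is to verify directly that a thin SVD of $K=\trsp{\bar{y}}\otimes\trsp{U}_A$ is obtained by taking the Kronecker product of thin SVDs of its two factors $\trsp{\bar{y}}$ and $\trsp{U}_A$. The only tools needed are the mixed-product property of the Kronecker product, $(P\otimes Q)(R\otimes S)=(PR)\otimes(QS)$, the unit normalization $\norm{\bar{y}}=1$, and the orthonormality of the columns of $U_A$, i.e. $\trsp{U}_AU_A=I_n$.

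First I would record the ingredient SVDs. Since $\bar{y}\in\bbR^m$ has unit norm, the $1\times m$ row vector $\trsp{\bar{y}}$ has the thin SVD $\trsp{\bar{y}}=1\cdot 1\cdot\trsp{\bar{y}}$, with single singular value $1$. Since $U_A$ has orthonormal columns, $\trsp{U}_A=I_n\cdot I_n\cdot\trsp{U}_A$ is a thin SVD of the $n\times m$ matrix $\trsp{U}_A$, with all $n$ singular values equal to $1$. Forming the Kronecker product of the corresponding left, middle, and right factors and invoking the mixed-product property gives
\[
K=\trsp{\bar{y}}\otimes\trsp{U}_A=(1\otimes I_n)(1\otimes I_n)(\bar{y}\otimes U_A)^{\trans}.
\]

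Second, I would confirm that the right-hand side is a genuine thin SVD. The middle factor equals $I_n$, which is diagonal with nonnegative entries, so $\sigma_i=1$ for all $i\in[n]$. The left factor $1\otimes I_n=I_n$ trivially has orthonormal columns. For the right factor, applying the mixed-product property once more, $(\bar{y}\otimes U_A)^{\trans}(\bar{y}\otimes U_A)=(\trsp{\bar{y}}\bar{y})\otimes(\trsp{U}_AU_A)=1\otimes I_n=I_n$, so $\bar{y}\otimes U_A\in\bbR^{m^2\times n}$ has orthonormal columns. Hence one may take $V=\bar{y}\otimes U_A$.

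I do not anticipate a real obstacle here: the whole argument is a couple of lines of Kronecker-product algebra. The only subtlety worth flagging is that, because all singular values of $K$ coincide, the right singular vectors of $K$ are not unique, so the precise claim is that $V=\bar{y}\otimes U_A$ is \emph{a} valid choice; this is all the reformulation \eqref{eq:cvx_max_sum} requires, since that step uses only the singular values (all equal to $1$) and some orthonormal basis of the row space of $K$.
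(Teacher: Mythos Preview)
Your proof is correct and follows essentially the same approach as the paper: both take the thin SVDs of the two Kronecker factors and combine them via the mixed-product property. You simply spell out the orthonormality checks and the non-uniqueness caveat that the paper leaves implicit.
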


Secondly, solving \eqref{eq:cvx_max_sum} is cheap for $n=1$. In this case the objective function is $-(\trsp{v_1}\text{vec}(B))^2$, so it suffices to solve
\begin{align}\label{eq:bounds-rectangle}
\min_{B \in \conv(\cP)} \trsp{\bar{y}}Bu_{A,1} \text{\ \ \ and\ \ \ } \max_{B \in \conv(\cP)} \trsp{\bar{y}}Bu_{A,1},
\end{align}
where $u_{A,1}$ is the first column of $U_A$ and $v_1=\bar{y}\otimes u_{A,1}$. To maximize $\trsp{\bar{y}}Bu_{A,1}$ over $B \in \conv(\cP)$ we can instead maximize it over $\cP$ since the former has some permutaion as its optimal solution. The latter is equivalent to maximizing $\trsp{\bar{y}}_{\uparrow}Bu_{A,1}$ over $\cP$, where $\trsp{\bar{y}}_{\uparrow}$ consists of the entries of $y$ arranged in ascending order. Letting $\trsp{\bar{y}}_{\downarrow}$ record the entries of $y$ in descending order we see through a similar lens that the left problem of \eqref{eq:bounds-rectangle} can be solved by minimizing $\trsp{\bar{y}}_{\downarrow}Bu_{A,1}$ over $\cP$. What comes into play next is the \textit{rearrangement inequality}, which states that it is the permutation bringing $u_{A,1}$ to $(u_{A,1})_{\uparrow}$ that maximizes $\trsp{\bar{y}}_{\uparrow}Bu_{A,1}$ and minimizes $\trsp{\bar{y}}_{\downarrow}Bu_{A,1}$ over $\cP$ simultaneously. To conclude we can solve \eqref{eq:bounds-rectangle} via sorting.

Finally, it is also cheap to compute the smallest rectangle $\cR^\circ$ that contains the constraint set of \eqref{eq:cvx_max_sum}, the latter being
\begin{align}
\cF^\circ=\{z\in\bbR^n: z=\trsp{V}\text{vec}(B),\ B\in\conv(\cP) \}.
\end{align}
As already noted, it is over $\cR^\circ$ that we branch. We compute $\cR^\circ$ as follows.
For $i\in[n]$, we have $z_i=\trsp{v_i}\text{vec}(B)$ and
	\begin{align}\label{eq:lb_i}
	\min_{B \in \conv(\cP)} \trsp{\bar{y}}Bu_{A,i} \leq z_i\leq  \max_{B \in \conv(\cP)} \trsp{\bar{y}}Bu_{A,i},
	\end{align}
	where we note that $v_i=\bar{y}\otimes u_{A,i}$. The minimum $l_i^\circ$ and maximum $u_i^\circ$ of $z_i$ can be computed by solving the two problems in \eqref{eq:lb_i} respectively via sorting. So $\cR^\circ$ is given by
\begin{align}\label{eq:initial_rectangle}
\cR^\circ = \{z\in\bbR^n : z_i\in [l_i^\circ, u_i^\circ],\ i\in[n] \}.
\end{align}

\subsection{The Lower Bounding Scheme}\label{subsection:lb}
Each iteration of the branching algorithm involves some sub-rectangle
$\cR= \{z\in\bbR^n: z_i\in \cI_i=[l_i, u_i],\ i\in[n] \}$ of $\cR^\circ$. We discuss how to determine a lower bound of $g=-\sum_{i=1}^n z_i^2$ over the feasible set $\cF^\circ \cap \cR$. 

Our lower bound computation is intimately related to the notion of \textit{convex envelop} of some function $h$ defined on a set $\cT$, denoted by $\conv_\cT(h)$, which is the largest convex function majorized by $h$ on $\cT$. Geometrically, $\conv_\cT(h)$ is the point-wise supremum of all affine functions bounded above by $h$ on $\cT$. This immediately gives us a formula for $\conv_\cR(g)$:
\begin{lemma}
	The convex envelope of $g$ over $\cR$ is given by
	\begin{align}
	\conv_\cR(g)(z)=\sum\nolimits_{i=1}^nl_iu_i-\sum\nolimits_{i=1}^n(l_i+u_i)z_i. \label{eq:ce}
	\end{align}
\end{lemma}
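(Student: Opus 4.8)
The plan is to verify directly that the affine function
$\phi(z):=\sum_{i=1}^n l_iu_i-\sum_{i=1}^n(l_i+u_i)z_i$ appearing on the right of \eqref{eq:ce} is (i) a convex minorant of $g$ on $\cR$, and (ii) dominates every convex minorant of $g$ on $\cR$. By the characterization of $\conv_\cR(g)$ recalled just before the lemma — the largest convex function majorized by $g$ on $\cR$ — these two facts force $\conv_\cR(g)=\phi$, which is exactly \eqref{eq:ce}.

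For (i), $\phi$ is affine, hence convex, so it only remains to check $\phi\le g$ on $\cR$. A short separable computation gives the identity $g(z)-\phi(z)=-\sum_{i=1}^n(z_i-l_i)(z_i-u_i)$. For every $z\in\cR$ one has $z_i-l_i\ge 0$ and $z_i-u_i\le 0$, so each summand is nonnegative; therefore $\phi\le g$ on $\cR$, and in particular $\phi\le\conv_\cR(g)$.

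For (ii), the key observation is that $\phi$ and $g$ coincide at every vertex of the box $\cR$: at a vertex $v$ the $i$-th coordinate equals $l_i$ or $u_i$, and substituting $z_i=l_i$ (resp. $z_i=u_i$) into $l_iu_i-(l_i+u_i)z_i$ yields precisely $-l_i^2$ (resp. $-u_i^2$), matching the $i$-th term of $g(v)$. Since $\cR$ is the convex hull of its $2^n$ vertices, any $z\in\cR$ can be written $z=\sum_v\mu_v v$ with $\mu_v\ge 0$, $\sum_v\mu_v=1$. Now if $\psi$ is any convex function with $\psi\le g$ on $\cR$, then Jensen's inequality together with affineness of $\phi$ gives $\psi(z)\le\sum_v\mu_v\psi(v)\le\sum_v\mu_v g(v)=\sum_v\mu_v\phi(v)=\phi\big(\sum_v\mu_v v\big)=\phi(z)$. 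Hence $\phi$ is the largest convex minorant of $g$ on $\cR$, i.e. $\conv_\cR(g)=\phi$.

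I do not expect a genuine obstacle: the only mild point is the passage through the vertex representation of $\cR$, which relies on the standard fact that a (possibly degenerate) box is the convex hull of its corner points, combined with the elementary check that the endpoint values of the concave parabola $-t^2$ on $[l_i,u_i]$ are reproduced exactly by the claimed affine form. Equivalently, one could phrase the argument as: $g$ is separable, the convex envelope of a separable sum over a box is the sum of the one-dimensional convex envelopes, and the convex envelope of $-t^2$ on $[l_i,u_i]$ is its chord $t\mapsto l_iu_i-(l_i+u_i)t$; the vertex argument above is essentially a self-contained proof of this reduction, so I would keep it as the main line.
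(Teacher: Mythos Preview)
Your argument is correct. The paper's own proof takes the separable route you mention at the end as an alternative: it writes $g=\sum_i g_i$ with $g_i(z_i)=-z_i^2$, invokes the general fact (citing Falk--Soland) that over a box the convex envelope of a separable sum is the sum of the one-dimensional envelopes, and then identifies $\conv_{\cI_i}(g_i)$ as the chord $z_i\mapsto l_iu_i-(l_i+u_i)z_i$.

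Your main line is a direct, self-contained verification: the factorization $g(z)-\phi(z)=-\sum_i(z_i-l_i)(z_i-u_i)$ gives the minorant property immediately, and the vertex/Jensen argument shows maximality without appealing to an external separability lemma. In effect your vertex argument \emph{is} a proof of that separability lemma specialized to this $g$, so the two approaches are close in spirit; yours buys independence from the cited reference at the cost of a few extra lines, while the paper's buys brevity by outsourcing the key structural step.
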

\begin{proof}
	Let $g_i(z_i)=-z_i^2$. Then $g$ is a sum of $g_i$'s. Thus $\conv_\cR(g)$ is a sum of the convex envelopes $\conv_{\cI_i}(g_i)$'s \cite{Falk-MS1969}. Moreover, $\conv_{\cI_i}(g_i)(z_i)=l_iu_i-(l_i+u_i)z_i$, that is, $\conv_{\cI_i}(g_i)$ is affine and agrees with $g_i$ at $l_i$ and $u_i$.
\end{proof}
Note that our interest is in a lower bound of $g$ over the feasible set $\cF^\circ\cap\cR$. Since $\conv_\cR(g)(z)\leq g(z)$ for any $z\in \cR$ and of course for any $z\in \cF^\circ\cap \cR$, the lower bound of $g(z)$ over $\cF^\circ\cap \cR$ can be obtained by solving
\begin{align}
&\min_{z\in \cF^\circ\cap \cR} \conv_\cR(g)(z) \\
\Leftrightarrow & \min_{z\in \cF^\circ\cap \cR} \sum\nolimits_{i=1}^nl_iu_i-(l_i+u_i)z_i \label{eq:lb_z} \\
\Leftrightarrow &\max_{B\in \cR_B}  \sum_{i=1}^n\nolimits(l_i+u_i)\trsp{v_i}\text{vec}(B). \label{eq:lb_constrained}
\end{align}
Going from \eqref{eq:lb_z} to \eqref{eq:lb_constrained} is rewriting the constraint $z\in\cF^\circ\cap \cR$ on $z$ of \eqref{eq:lb_z} into $B\in\cR_B$ on $B$ of \eqref{eq:lb_constrained}, where we define
	\begin{align}
	\cR_B=\{B\in\conv(\cP):\trsp{V}\text{vec}(B)\in\cR \}.
	\end{align}
Solving the linear program \eqref{eq:lb_constrained} is still expensive in practice.
Instead, following \cite{Lian-PAMI17}, we solve \eqref{eq:lb_constrained} over the superset  $\conv(\cP)$ of $\cR_B$ by linear assignment algorithms in $\cO(m^3)$ time \cite{Jonker-Computing1987}, trading tightness for efficiency. 

\subsection{The Upper Bound Computation}\label{subsection:ub}
Typically the optimal solution to \eqref{eq:lb_constrained} is used to compute an upper bound of the optimal value of $f$. We refine this strategy. Having solved \eqref{eq:lb_constrained} over $\conv(\cP)$, we use this solution $\underline{\Pi}_0$ as initialization to solve \eqref{eq:MLE} via alternating minimization to get $\underline{\Pi}$ and $f(\underline{\Pi})$ as the upper bound. Then $f(\underline{\Pi}) \leq f(\underline{\Pi}_0)$. Alternating minimization is of complexity $\cO(m\log(m)+n^2m)$ per iteration \cite{Peng-ICASSP2019} and in our experiments it typically terminates within about $50$ iterations for $n\leq 8, \, m\leq 1000$.




\section{Numerical Results}\label{section:numerics}
We compare our algorithm\footnote{Full code available at \url{https://github.com/liangzu/CCVMIN}.} (\texttt{CCV-Min}) with several existing methods: the branch-and-bound algorithm with dynamic programming \cite{Tsakiris-ICML2019}-A that globally minimizes \eqref{eq:MLE}, the RANSAC-type scheme \cite{Tsakiris-ICML2019}-B, the algebraic-geometric solution based on Gr{\"o}bner basis computation \cite{Tsakiris-arXiv18v2}, the convex $\ell_1$ robust regression \cite{Slawski-JoS19}, and the pseudo-likelihood method \cite{Slawski-arXiv2019b}.

\begin{figure*}[t!]
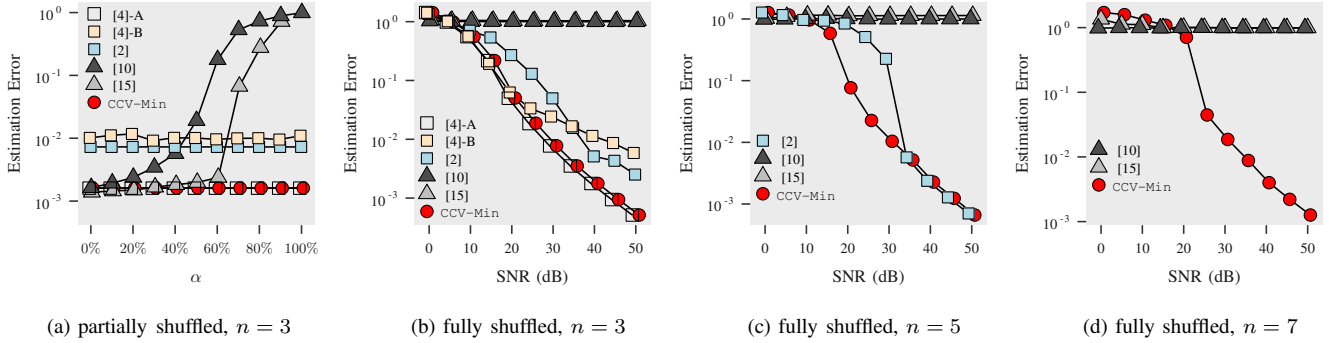

	\centering
	\subfloat[partially shuffled, $n=3$]{\input{./figures/partially_shuffled_randn.tex} \label{fig:partially_shuffled_performance_randn} }
	\subfloat[fully shuffled, $n=3$]{\hspace*{-0.35cm}\input{./figures/fully_shuffled_n3.tex} \label{fig:fully_shuffled_performance_n3} }
	\subfloat[fully shuffled, $n=5$]{\hspace*{-0.35cm}\input{./figures/fully_shuffled_n5.tex} \label{fig:fully_shuffled_performance_n5} }
	\subfloat[fully shuffled, $n=7$]{\hspace*{-0.35cm}\input{./figures/fully_shuffled_n7.tex} \label{fig:fully_shuffled_performance_n7} }
	\caption{Estimation errors of the compared methods for fully and partially shuffled data with $m=100$ fixed. In (a) $\text{SNR}=40\text{dB}$.}
\end{figure*}
\myparagraph{Experiments on synthetic data}
We generate synthetic data as in \cite{Tsakiris-ICML2019}. Entries of the $m\times n$ matrix $A$ and the signal $x^*\in \bbR^n$ are randomly sampled from the standard normal distribution. The vector $y$ is then obtained by 1) randomly shuffling fixed percentage $\alpha$ of entries of $Ax^*$, where $0\leq \alpha\leq 100\%$ and 2) contaminating it with additive noise sampled from the normal distribution $\cN(0,\sigma^2 I_m)$. We evaluate the algorithms by the relative estimation error of $x^*$, $\frac{\norm{\bar{x}-x^*}}{\norm{x^*}}$, with $\bar{x}$ the algorithmic output.\footnote{\texttt{CCV-Min} returns a permutation, so we get $\bar{x}$ via least-squares.} We also report running times of the algorithms.\footnote{Experiments are run on
	an Intel(R) i7-8650U, 1.9GHz, 16GB machine.}

Fig. \ref{fig:partially_shuffled_performance_randn} shows the estimation errors for different percentages $\alpha$ of shuffled data, with $m=100,n=3$. The errors of algorithms \cite{Slawski-JoS19} and \cite{Slawski-arXiv2019b} increase from about $0.1\%$ to $100\%$ as $\alpha$ grows, with breaking points $\alpha= 50\%$ and $\alpha=70\%$ respectively. The other algorithms are immune to the shuffled ratios. Specifically, \cite{Tsakiris-ICML2019}-B and \cite{Tsakiris-arXiv18v2} result in errors of roughly $1\%$, while \cite{Tsakiris-ICML2019}-A and \texttt{CCV-Min} enjoy errors no more than $0.3\%$. Note that \cite{Tsakiris-ICML2019}-A and \texttt{CCV-Min} have the same performance because they solve the equivalent \eqref{eq:MLE} and \eqref{eq:cvx_max} respectively to global optimality. Fig. \ref{fig:fully_shuffled_performance_n3} depicts the errors under different SNR values for $m=100,n=3,\alpha=100\%$. While \cite{Slawski-JoS19} and \cite{Slawski-arXiv2019b} can not cope with fully shuffled data, the rest four methods exhibit decreased errors as the SNR values increase. Figs. \ref{fig:fully_shuffled_performance_n5} and \ref{fig:fully_shuffled_performance_n7} are produced under the same settings as in Fig. \ref{fig:fully_shuffled_performance_n3} except respectively for $n=5$ and $n=7$. Interestingly, the error curves of \texttt{CCV-Min} follow the same trend $n$ even though $n$ is now larger. Note \cite{Tsakiris-ICML2019}-A, \cite{Tsakiris-ICML2019}-B and \cite{Tsakiris-arXiv18v2} were not included since they need more than $12$ hours to terminate for $n=5$ or $n=7$. On the other hand, \texttt{CCV-Min} terminates in about $7$sec and $6$min respectively. Table \ref{table:fully_shuffled_rt} presents a more comprehensive view of the running times as $n$ varies \footnote{\cite{Slawski-JoS19},\cite{Slawski-arXiv2019b} are not included as they only work for partial mismatches.}. We see that \texttt{CCV-Min} is fast in general, the fastest when $n\geq 5$, and the \textit{only} tractable method for $n=7,8$ in particular. On the other hand its breaking point is $n=9$, which we attribute to the inherent complexity of the branch-and-bound scheme. It is important though to contrast this to the breaking point $n=5$ of \cite{Tsakiris-ICML2019}-A which is also a branch-and-bound method, the advantage of \texttt{CCV-Min} being due to its tight lower bound computation, a consequence of the concave minimization formulation.

\begin{table}
	\centering
	\caption{Running times for different $n$ with $\text{SNR}=40\text{DB},m=100$ fixed. Fully shuffled data.} \label{table:fully_shuffled_rt}
	\begin{tabular}{@{}rrccccc@{}}\toprule
		& \multicolumn{4}{c}{Running Time} \\
		\cmidrule{2-5} 
		$n$ & \cite{Tsakiris-ICML2019}-A & \cite{Tsakiris-ICML2019}-B & \cite{Tsakiris-arXiv18v2} &  \texttt{CCV-Min}    \\ \midrule
		$3$ &            $0.48$sec        &             $37$sec        &        $3$msec              &    $0.42$sec    \\
		$4$ &            $5$sec          &             $17$min        &        $7$msec              &    $2.43$sec    \\
		$5$ &            $>12$hr         &             $>12$hr        &        $43$sec            &    $7.16$sec   \\
		$6$ &                            &                           &        $37$min           &    $72.5$sec    \\
		$7$ &                           &                          &         $>12$hr          &    $6$min   \\
		$8$ &                           &                          &                        &  $40$min  \\
		$9$ &                           &                          &                        &  $>12$hr  \\
		\bottomrule
	\end{tabular}
\end{table}

\begin{figure}
	\centering
	\subfloat{
\begin{tikzpicture}[x=1pt,y=1pt]
\definecolor{fillColor}{RGB}{255,255,255}
\path[use as bounding box,fill=fillColor,fill opacity=0.00] (0,0) rectangle (216.81,101.18);
\begin{scope}
\path[clip] (  0.00,  0.00) rectangle (216.81,101.18);
\definecolor{drawColor}{RGB}{255,255,255}
\definecolor{fillColor}{RGB}{255,255,255}

\path[draw=drawColor,line width= 0.6pt,line join=round,line cap=round,fill=fillColor] (  0.00,  0.00) rectangle (216.81,101.18);
\end{scope}
\begin{scope}
\path[clip] ( 27.02, 25.42) rectangle (211.31, 95.68);
\definecolor{fillColor}{gray}{0.92}

\path[fill=fillColor] ( 27.02, 25.42) rectangle (211.31, 95.68);
\definecolor{drawColor}{RGB}{0,0,0}
\definecolor{fillColor}{RGB}{255,0,0}

\path[draw=drawColor,line width= 0.6pt,line join=round,fill=fillColor] ( 71.18, 28.61) rectangle ( 75.97, 40.29);
\definecolor{fillColor}{RGB}{190,190,190}

\path[draw=drawColor,line width= 0.6pt,line join=round,fill=fillColor] ( 66.38, 28.61) rectangle ( 71.18, 92.48);
\definecolor{fillColor}{gray}{0.30}

\path[draw=drawColor,line width= 0.6pt,line join=round,fill=fillColor] ( 61.58, 28.61) rectangle ( 66.38, 91.00);
\definecolor{fillColor}{RGB}{173,216,230}

\path[draw=drawColor,line width= 0.6pt,line join=round,fill=fillColor] ( 56.78, 28.61) rectangle ( 61.58, 53.26);
\definecolor{fillColor}{RGB}{255,228,196}

\path[draw=drawColor,line width= 0.6pt,line join=round,fill=fillColor] ( 51.98, 28.61) rectangle ( 56.78, 42.45);
\definecolor{fillColor}{RGB}{255,255,255}

\path[draw=drawColor,line width= 0.6pt,line join=round,fill=fillColor] ( 47.18, 28.61) rectangle ( 51.98, 40.38);
\definecolor{fillColor}{RGB}{255,0,0}

\path[draw=drawColor,line width= 0.6pt,line join=round,fill=fillColor] (126.37, 28.61) rectangle (133.56, 32.30);
\definecolor{fillColor}{RGB}{190,190,190}

\path[draw=drawColor,line width= 0.6pt,line join=round,fill=fillColor] (119.17, 28.61) rectangle (126.37, 62.12);
\definecolor{fillColor}{gray}{0.30}

\path[draw=drawColor,line width= 0.6pt,line join=round,fill=fillColor] (111.97, 28.61) rectangle (119.17, 58.62);
\definecolor{fillColor}{RGB}{173,216,230}

\path[draw=drawColor,line width= 0.6pt,line join=round,fill=fillColor] (104.77, 28.61) rectangle (111.97, 36.83);
\definecolor{fillColor}{RGB}{255,0,0}

\path[draw=drawColor,line width= 0.6pt,line join=round,fill=fillColor] (183.95, 28.61) rectangle (191.15, 31.47);
\definecolor{fillColor}{RGB}{190,190,190}

\path[draw=drawColor,line width= 0.6pt,line join=round,fill=fillColor] (176.76, 28.61) rectangle (183.95, 73.74);
\definecolor{fillColor}{gray}{0.30}

\path[draw=drawColor,line width= 0.6pt,line join=round,fill=fillColor] (169.56, 28.61) rectangle (176.76, 64.36);
\definecolor{fillColor}{RGB}{173,216,230}

\path[draw=drawColor,line width= 0.6pt,line join=round,fill=fillColor] (162.36, 28.61) rectangle (169.56, 48.61);
\end{scope}
\begin{scope}
\path[clip] (  0.00,  0.00) rectangle (216.81,101.18);
\definecolor{drawColor}{gray}{0.10}

\node[text=drawColor,anchor=base east,inner sep=0pt, outer sep=0pt, scale=  0.55] at ( 22.07, 26.72) {$0\%$};

\node[text=drawColor,anchor=base east,inner sep=0pt, outer sep=0pt, scale=  0.55] at ( 22.07, 47.09) {$1\%$};

\node[text=drawColor,anchor=base east,inner sep=0pt, outer sep=0pt, scale=  0.55] at ( 22.07, 67.45) {$2\%$};

\node[text=drawColor,anchor=base east,inner sep=0pt, outer sep=0pt, scale=  0.55] at ( 22.07, 87.82) {$3\%$};
\end{scope}
\begin{scope}
\path[clip] (  0.00,  0.00) rectangle (216.81,101.18);
\definecolor{drawColor}{gray}{0.20}

\path[draw=drawColor,line width= 0.6pt,line join=round] ( 24.27, 28.61) --
	( 27.02, 28.61);

\path[draw=drawColor,line width= 0.6pt,line join=round] ( 24.27, 48.98) --
	( 27.02, 48.98);

\path[draw=drawColor,line width= 0.6pt,line join=round] ( 24.27, 69.34) --
	( 27.02, 69.34);

\path[draw=drawColor,line width= 0.6pt,line join=round] ( 24.27, 89.71) --
	( 27.02, 89.71);
\end{scope}
\begin{scope}
\path[clip] (  0.00,  0.00) rectangle (216.81,101.18);
\definecolor{drawColor}{gray}{0.20}

\path[draw=drawColor,line width= 0.6pt,line join=round] ( 61.58, 22.67) --
	( 61.58, 25.42);

\path[draw=drawColor,line width= 0.6pt,line join=round] (119.17, 22.67) --
	(119.17, 25.42);

\path[draw=drawColor,line width= 0.6pt,line join=round] (176.76, 22.67) --
	(176.76, 25.42);
\end{scope}
\begin{scope}
\path[clip] (  0.00,  0.00) rectangle (216.81,101.18);
\definecolor{drawColor}{gray}{0.10}

\node[text=drawColor,anchor=base,inner sep=0pt, outer sep=0pt, scale=  0.55] at ( 61.58, 16.68) {$n\leq 4$};

\node[text=drawColor,anchor=base,inner sep=0pt, outer sep=0pt, scale=  0.55] at (119.17, 16.68) {$n=5$};

\node[text=drawColor,anchor=base,inner sep=0pt, outer sep=0pt, scale=  0.55] at (176.76, 16.68) {$n=6$};
\end{scope}
\begin{scope}
\path[clip] (  0.00,  0.00) rectangle (216.81,101.18);
\definecolor{drawColor}{gray}{0.10}

\node[text=drawColor,rotate= 90.00,anchor=base,inner sep=0pt, outer sep=0pt, scale=  0.66] at ( 10.05, 60.55) {Residual Error};
\end{scope}
\begin{scope}
\path[clip] (  0.00,  0.00) rectangle (216.81,101.18);
\definecolor{drawColor}{RGB}{0,0,0}
\definecolor{fillColor}{RGB}{255,255,255}

\path[draw=drawColor,line width= 0.6pt,line cap=round,fill=fillColor] (129.48, 87.60) rectangle (134.20, 92.32);
\end{scope}
\begin{scope}
\path[clip] (  0.00,  0.00) rectangle (216.81,101.18);
\definecolor{drawColor}{RGB}{0,0,0}
\definecolor{fillColor}{RGB}{255,228,196}

\path[draw=drawColor,line width= 0.6pt,line cap=round,fill=fillColor] (129.48, 81.46) rectangle (134.20, 86.18);
\end{scope}
\begin{scope}
\path[clip] (  0.00,  0.00) rectangle (216.81,101.18);
\definecolor{drawColor}{RGB}{0,0,0}
\definecolor{fillColor}{RGB}{173,216,230}

\path[draw=drawColor,line width= 0.6pt,line cap=round,fill=fillColor] (129.48, 75.31) rectangle (134.20, 80.03);
\end{scope}
\begin{scope}
\path[clip] (  0.00,  0.00) rectangle (216.81,101.18);
\definecolor{drawColor}{RGB}{0,0,0}
\definecolor{fillColor}{gray}{0.30}

\path[draw=drawColor,line width= 0.6pt,line cap=round,fill=fillColor] (129.48, 69.17) rectangle (134.20, 73.89);
\end{scope}
\begin{scope}
\path[clip] (  0.00,  0.00) rectangle (216.81,101.18);
\definecolor{drawColor}{RGB}{0,0,0}
\definecolor{fillColor}{RGB}{190,190,190}

\path[draw=drawColor,line width= 0.6pt,line cap=round,fill=fillColor] (129.48, 63.03) rectangle (134.20, 67.75);
\end{scope}
\begin{scope}
\path[clip] (  0.00,  0.00) rectangle (216.81,101.18);
\definecolor{drawColor}{RGB}{0,0,0}
\definecolor{fillColor}{RGB}{255,0,0}

\path[draw=drawColor,line width= 0.6pt,line cap=round,fill=fillColor] (129.48, 56.88) rectangle (134.20, 61.60);
\end{scope}
\begin{scope}
\path[clip] (  0.00,  0.00) rectangle (216.81,101.18);
\definecolor{drawColor}{RGB}{0,0,0}

\node[text=drawColor,anchor=base west,inner sep=0pt, outer sep=0pt, scale=  0.53] at (137.08, 88.14) {\cite{Tsakiris-ICML2019}-A};
\end{scope}
\begin{scope}
\path[clip] (  0.00,  0.00) rectangle (216.81,101.18);
\definecolor{drawColor}{RGB}{0,0,0}

\node[text=drawColor,anchor=base west,inner sep=0pt, outer sep=0pt, scale=  0.53] at (137.08, 82.00) {\cite{Tsakiris-ICML2019}-B};
\end{scope}
\begin{scope}
\path[clip] (  0.00,  0.00) rectangle (216.81,101.18);
\definecolor{drawColor}{RGB}{0,0,0}

\node[text=drawColor,anchor=base west,inner sep=0pt, outer sep=0pt, scale=  0.53] at (137.08, 75.85) {\cite{Tsakiris-arXiv18v2}};
\end{scope}
\begin{scope}
\path[clip] (  0.00,  0.00) rectangle (216.81,101.18);
\definecolor{drawColor}{RGB}{0,0,0}

\node[text=drawColor,anchor=base west,inner sep=0pt, outer sep=0pt, scale=  0.53] at (137.08, 69.71) {\cite{Slawski-JoS19}};
\end{scope}
\begin{scope}
\path[clip] (  0.00,  0.00) rectangle (216.81,101.18);
\definecolor{drawColor}{RGB}{0,0,0}

\node[text=drawColor,anchor=base west,inner sep=0pt, outer sep=0pt, scale=  0.53] at (137.08, 63.57) {\cite{Slawski-arXiv2019b}};
\end{scope}
\begin{scope}
\path[clip] (  0.00,  0.00) rectangle (216.81,101.18);
\definecolor{drawColor}{RGB}{0,0,0}

\node[text=drawColor,anchor=base west,inner sep=0pt, outer sep=0pt, scale=  0.53] at (137.08, 57.43) {\texttt{CCV-Min}};
\end{scope}
\end{tikzpicture} \label{fig:college_MLR} }
	\vspace*{-0.5cm}
	\caption{Residual errors on the real data \cite{brase-book2011} for different $n$'s.}
\end{figure}
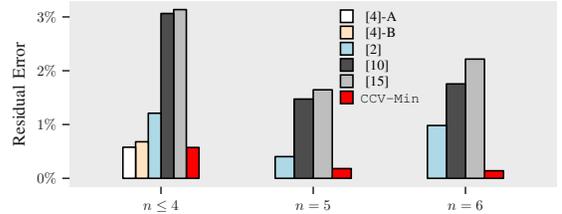
\myparagraph{Experiments on real data}
We use eleven linear regression datasets $\{y^{(i)}, A^{(i)}\}_{i=1}^{11}$ from \cite{brase-book2011}, arising in contexts such as
predicting the blood pressure from the age and weight, box office receipts of Hollywood movies from production and promotional costs, or the final scores for General Psychology from previous exams. The regression orders $n^{(i)}$ take values in $\{2,\dots, 6\}$ and the number $m^{(i)}$ of samples varies from $10$ to $60$. To generate mismatches we randomly fully shuffle the rows of $A^{(i)}$. Since the ground truth $x^*$ is not available, we use the averaged minimal residual error $\frac{1}{m^{(i)}\norm{y^{(i)}}}\min_{\Pi\in \cP}\big|\big|\Pi  y^{(i)}-A^{(i)} \bar{x}\big| \big|_2$. This is plotted in Fig. \ref{fig:college_MLR}. An immediate observation
is that all methods, even the ones that are expected to succeed only with partial mismatches \cite{Slawski-JoS19}, \cite{Slawski-arXiv2019b}, all perform quite well with errors roughly not more than $3\%$. This is because features across different samples appear to be highly correlated, so that the effect of the permutation is only mild\footnote{A similar phenomenon has been observed in \cite{Slawski-arXiv2019b}.}. Be as it may \texttt{CCV-Min} consistently gives the smallest errors\footnote{Of the same order as those of standard linear regression.}.

\newpage
\label{sec:refs}
%

\begin{thebibliography}{10}
\providecommand{\url}[1]{#1}
\csname url@samestyle\endcsname
\providecommand{\newblock}{\relax}
\providecommand{\bibinfo}[2]{#2}
\providecommand{\BIBentrySTDinterwordspacing}{\spaceskip=0pt\relax}
\providecommand{\BIBentryALTinterwordstretchfactor}{4}
\providecommand{\BIBentryALTinterwordspacing}{\spaceskip=\fontdimen2\font plus
\BIBentryALTinterwordstretchfactor\fontdimen3\font minus
  \fontdimen4\font\relax}
\providecommand{\BIBforeignlanguage}[2]{{%
\expandafter\ifx\csname l@#1\endcsname\relax
\typeout{** WARNING: IEEEtran.bst: No hyphenation pattern has been}%
\typeout{** loaded for the language `#1'. Using the pattern for}%
\typeout{** the default language instead.}%
\else
\language=\csname l@#1\endcsname
\fi
#2}}
\providecommand{\BIBdecl}{\relax}
\BIBdecl

\bibitem{Unnikrishnan:2015}
J. Unnikrishnan, S. Haghighatshoar and M. Vetterli, ``Unlabeled sensing: Solving a linear system with unordered measurements," \emph{53rd Annu. Allerton Conf. Commun. Control Comput. (Allerton)}, 2015, pp. 786-793.

\bibitem{Tsakiris-arXiv18v2}
M.~C. {Tsakiris}, L.~{Peng}, A.~{Conca}, L.~{Kneip}, Y.~{Shi}, and H.~{Choi},
``{An algebraic-geometric approach for linear regression without
correspondences},'' \emph{IEEE Trans. Inf. Theory}, vol. 66, no. 8, pp. 5130-5144, Aug. 2020.

\bibitem{Unnikrishnan-TIT18}
J.~Unnikrishnan, S.~Haghighatshoar, and M.~Vetterli, ``Unlabeled sensing with
  random linear measurements,'' \emph{IEEE Trans. Inf. Theory}, vol.~64, no.~5,
  pp. 3237--3253,  2018.

\bibitem{Tsakiris-ICML2019}
M.~C. Tsakiris and L.~Peng, ``Homomorphic sensing,'' in \emph{Proc. the 36th
Int. Conf. Mach. Learning}, Long Beach, California, USA, 2019.

\bibitem{Tsakiris-arXiv18b}
M.~C. {Tsakiris}, ``{Eigenspace conditions for homomorphic sensing},''
arXiv:1812.07966v4 [math.CO], Tech. Rep., 2019.

\bibitem{Tsakiris-arXiv18b-v5} M. C. Tsakiris, ``Determinantal conditions for homomorphic sensing,'' arXiv:1812.07966v5 [math.CO], Tech. Rep., 2020.

\bibitem{Peng-arXiv2020}
L. Peng, B. Wang, and M. C. Tsakiris, ``Homomorphic sensing of subspace arrangements,'' arXiv:2006.05158 [cs.LG], Tech. Rep., 2020.


\bibitem{Dokmanic-SPL2019}
I.~{Dokmanic}, ``Permutations unlabeled beyond sampling unknown,'' \emph{IEEE
  Signal Process. Lett.}, vol.~26, no.~6, pp. 823--827, 2019.

\bibitem{Pananjady-TIT18}
A.~Pananjady, M.~J. Wainwright, and T.~A. Courtade, ``Linear regression with
shuffled data: Statistical and computational limits of permutation
recovery,'' \emph{IEEE Trans. Inf. Theory}, vol.~64, no.~5, pp. 3286--3300,
2018.

\bibitem{Slawski-JoS19}
M.~Slawski and E.~Ben-David, ``Linear regression with sparsely permuted data,''
\emph{Electron. J. Statist.}, vol.~13, no.~1, pp. 1--36, 2019.

\bibitem{Lahiri-JASA2005}
P.~Lahiri and M.~D. Larsen, ``Regression analysis with linked data,'' \emph{J.
Am. Stat. Assoc.}, vol. 100, no. 469, pp. 222--230, 2005.

\bibitem{Shi-arXiv18}
X.~Shi, X.~Li, and T.~Cai, ``Spherical regression under mismatch corruption
with application to automated knowledge translation,'' arXiv:1810.05679v2 [stat.ME], Tech. Rep., 2019.

\bibitem{Slawski-UAI2019}
M.~Slawski, M.~Rahmani, and P.~Li, ``A sparse representation-based approach to
linear regression with partially shuffled labels,'' in \emph{Proceedings of
the Thirty-Fifth Conference on Uncertainty in Artificial Intelligence, {UAI}
2019}, 2019, p.~7.

\bibitem{Slawski-arXiv2019}
M.~{Slawski}, E.~{Ben-David}, and P.~{Li}, ``{A Two-Stage Approach to
Multivariate Linear Regression with Sparsely Mismatched Data},''
arXiv:1907.07148 [stat.ML], Tech Rep., 2019.


\bibitem{Slawski-arXiv2019b}
M.~{Slawski}, G.~{Diao}, and E.~{Ben-David}, ``{A Pseudo-Likelihood Approach to
Linear Regression with Partially Shuffled Data},'' arXiv:1910.01623
[stat.ME], Tech. Rep., 2019.

\bibitem{Zhang-arXiv2019}
H. Zhang, M. Slawski, P. Ling, ``Permutation recovery from multiple measurement vectors in unlabeled sensing," in \emph{Proc. IEEE Int. Symp. Inf. Theory (ISIT)}, 2019, pp. 1857-1861.

\bibitem{Li-ICCV07}
H.~Li and R.~Hartley, ``The 3d-3d registration problem revisited,'' in
  \emph{IEEE 11th Int. Conf. Comput. Vision}, 2007, pp. 1--8.

\bibitem{Lian-PAMI17}
W.~Lian, L.~Zhang, and M.~Yang, ``An efficient globally optimal algorithm for
  asymmetric point matching,'' \emph{IEEE Trans. Pattern Anal. Mach. Intell.},
  vol.~39, no.~7, pp. 1281--1293, 2017.

\bibitem{Song-ISIT18}
X.~Song, H.~Choi, and Y.~Shi, ``Permuted linear model for header-free
  communication via symmetric polynomials,'' in \emph{Proc. IEEE Int. Symp.
  Inf. Theory (ISIT)}, 2018, pp. 661--665.

\bibitem{Peng-ICASSP2019}
L.~{Peng}, X.~{Song}, M.~C. {Tsakiris}, H.~{Choi}, L.~{Kneip}, and Y.~{Shi},
  ``Algebraically-initialized expectation maximization for header-free
  communication,'' in \emph{Proc. IEEE Int. Conf. Acoust. Speech Signal
  Process. (ICASSP)}, 2019, pp. 5182--5186.

\bibitem{Zhu-CL2017}
J.~{Zhu}, H.~{Cao}, C.~{Song}, and Z.~{Xu}, ``Parameter estimation via
  unlabeled sensing using distributed sensors,'' \emph{IEEE Commun.
  Lett.}, vol.~21, no.~10, pp. 2130--2133, 2017.


\bibitem{Hsu-NIPS17}
D.~Hsu, K.~Shi, and X.~Sun, ``Linear regression without correspondence,'' in
\emph{Advances in Neural Inf. Process. Syst. 30 (NeurIPS)}, 2017.

\bibitem{Abid:arXiv17}
A. Abid, A. Poon, and J.Zou, ``Linear regression with shuffled labels,"
arXiv:1705.01342v2 [stat.ML], Tech. Rep., 2017.

\bibitem{Abid:arXiv18}
A. Abid and J. Zou,
``Stochastic em for shuffled linear regression". arXiv:1804.00681v1 [stat.ML], Tech. Rep., 2018.


\bibitem{Haghighatshoar-TSP18} S. Haghighatshoar and G. Caire, "Signal recovery from unlabeled samples,"  \emph{IEEE Trans. Signal Process.}, vol. 66, no. 5, pp. 1242-1257, 2018.

\bibitem{Elhami-ICASSP17}
G.~Elhami, A.~Scholefield, B.~B. Haro, and M.~Vetterli, ``Unlabeled sensing:
  Reconstruction algorithm and theoretical guarantees,'' in \emph{EEE Int.
  Conf. Acoust. Signal Process. (ICASSP)}, 2017, pp. 4566--4570.

\bibitem{Emiya-ICASSP14}
V.~Emiya, A.~Bonnefoy, L.~Daudet, and R.~Gribonval, ``Compressed sensing with
  unknown sensor permutation,'' in \emph{Proc. IEEE Int. Conf. Acoust. Speech
  Signal Process. (ICASSP)}, 2014, pp. 1040--1044.

\bibitem{Falk-MS1969}
J.~E. Falk and R.~M. Soland, ``An algorithm for separable nonconvex programming
  problems,'' \emph{Manag. Sci.}, vol.~15, no.~9, pp. 550--569, 1969.

\bibitem{Kalantari-MOR1987}
B. Kalantari and J. B. Rosen, ``An Algorithm for Global Minimization of Linearly Constrained Concave Quadratic Functions,'' \emph{Math. Operations Research}, vol.~12, no.~3, pp. 544--561, 1987.

\bibitem{Shen-TWC2019}
Y.~{Shen}, Y.~{Shi}, J.~{Zhang}, and K.~B. {Letaief}, ``Lorm: Learning to
  optimize for resource management in wireless networks with few training
  samples,'' \emph{IEEE Trans. Wireless Commun.}, pp. 1--1, 2019.

\bibitem{Ji-IEEE2019}
R.~Ji, Y.~Liang, L.~Xu and W.~Zhang, ``A Concave Optimization-Based Approach for Joint Multi-Target Track Initialization,'' \emph{IEEE Access} , vol. 7, pp. 108551-108560, 2019.

\bibitem{Birkhoff-1946}
G.~Birkhoff, ``Tres observaciones sobre el algebra lineal,'' \emph{Univ. Nac.
  Tucuman, Ser. A}, vol.~5, pp. 147--154, 1946.

\bibitem{Neumann-1953}
J.~Von~Neumann, ``A certain zero-sum two-person game equivalent to the optimal
  assignment problem,'' \emph{Contrib. Theory Games}, vol.~2, no.~0, pp. 5--12,
  1953.

\bibitem{Burkard-AP09}
R.~Burkard, M.~Dell'Amico, and S.~Martello, \emph{Assignment Problems}.  Society for
  Industrial and Applied Mathematics, 2009.

\bibitem{Jonker-Computing1987}
R.~Jonker and A.~Volgenant, ``A shortest augmenting path algorithm for dense
  and sparse linear assignment problems,'' \emph{Comput.}, vol.~38, no.~4, pp.
  325--340, Dec 1987.

\bibitem{Fenchel-1949}
W.~Fenchel, ``On conjugate convex functions,'' \emph{Canadian J. 
  Mathematics}, vol.~1, no.~1, p. 73–77, 1949.


\bibitem{brase-book2011}
C.~H. Brase and C.~P. Brase, \emph{Understandable statistics: Concepts and
  methods}. Cengage Learning, 2011.
  
\bibitem{Yurtsever-arXiv2019}
A. Yurtsever, J. Tropp, O. Fercoq, M. Udell, and V. Cevher, ``Scalable Semidefinite Programming,'' arXiv:1912.02949 [math.OC], Tech. Rep., 2019.

\end{thebibliography}

\end{document}